\newcommand{\N}{\mathbb{N}}
\newcommand{\R}{\mathbb{R}}
\newcommand{\eps}{\varepsilon}
\renewcommand{\phi}{\varphi}
\newcommand{\calO}{\mathcal{O}}
\newcommand{\calR}{\mathcal{R}}
\renewcommand{\bar}[1]{\overline{#1}}
\newcommand{\norm}[1]{\left\|#1\right\|}
\DeclareMathOperator{\pder}{\partial}
\newcommand{\lb}{\left(}
\newcommand{\lsb}{\left[}
\newcommand{\lcb}{\left\{}
\newcommand{\rb}{\right)}
\newcommand{\rsb}{\right]}
\newcommand{\rcb}{\right\}}
\newcommand{\lpm}{\begin{pmatrix}}
\newcommand{\rpm}{\end{pmatrix}}
\newtheorem{theorem}{Theorem}
\newtheorem{lemma}[theorem]{Lemma}
\begin{document}

\title[Age--dependent dengue model: Analysis and comparison to field data] {An Age--Dependent Model for Dengue Transmission: Analysis and Comparison to Field Data from Semarang, Indonesia}
\author{N.C.~Ganegoda}
\address{Department of Mathematics, University of Sri Jayewardenepura, Sri Lanka}
\email{naleencg@gmail.com}

\author{T.~Götz and K.P.~Wijaya}
\address{Mathematical Institute, University of Koblenz--Landau, 56070 Koblenz, Germany}
\email[]{goetz@uni-koblenz.de (Corresponding author)}
\email{karuniaputra@uni-koblenz.de}



\begin{abstract}
Medical statistics reveal a significant dependence of hospitalized dengue patient on the patient's age. To incorporate an age--dependence into a mathematical model, we extend the classical ODE system of disease dynamics to a PDE system. The equilibrium distribution is then determined by the fixed points of resulting integro--differential equations. In this paper we use an extension of the concept of the basic reproductive number to characterize parameter regimes, where either only the disease--free or an  endemic equilibrium exists. Using rather general and minimal assumptions on the population distribution and on the age--dependent transmission rate, we prove the existence of those equilibria. Furthermore, we are able to prove the convergence of an iteration scheme to compute the endemic equilibrium. To validate our model, we use existing data from the city of Semarang, Indonesia for comparison and to identify the model parameters.

\smallskip
\textbf{Keywords:}
Age--dependent disease dynamics, equilibria, integro--differential equation, parameter estimation.
\end{abstract}

\maketitle

\section{Introduction}

In the current fashion of semi-linear hyperbolic PDE, age-structured population models have been appearing in the literature likely since 1972, where the pioneering model served to describe human demography~\cite{Coa72}. Due to high practical relevance, ensuing models have thus been directed to understand disease epidemics~\cite{Hop75,DS85,MD86}. For quite recent work, one may refer  to~\cite{LEK14,RWK15,WCL18}. Aspects regarding disease characteristics that are relevant to the modeling have been brought on the table, for example, vertical transmission, mode of disease trasmission from infectives to susceptibles (intercohort or intracohort), disease-related age-dependent deaths, age-dependent recovery rate, age-dependent natural death rate, measure on disease persistence (the \emph{basic reproductive number}), host--vector interaction, and bite structure. As high-throughput data are nowadays becoming more available, not only can one fix the existing models with close forms of parameters involved, but data assimilation practices can also be assigned to produce results that are readable throughout different disciplines. On the other hand, understanding data quality is also necessary to develop preliminary assessment before putting the data into modeling. Many previous papers on age-dependent models counted heavily on model novelties and the corresponding analytical aspects as well as numerical visualizations~\cite{LEK14,RWK15,WCL18,BCI88,Ina90,IMP92,Cap93,IS04,FP08}. Most of them have left some parameters such as the infection rate and natural death rate appear in general forms, equipped with certain cost-to-go criteria for satisfactory analytical results. Notwithstanding data availability, these criteria have sensibly been built to generate broad feasible spaces within which optimal parameters resulted from data assimilation might instantly lie. These typically include positivity, boundedness (be it in the pointwise sense or in the sense of Lebesgue), and piecewise continuity.

Here we rest our novelty on two pillars: parameter modeling and data assimilation. The underlying model has been taken from the usual host-vector model with age structure in host population. Due to data unavailability, we exclude vector population in the model by the aid of time scale separation, in which way one can assume constancy of the population under a sufficiently large time scale. The resulting model is an SI model, which includes additional information in new parameters regarding the presence of vector population. Models to some parameters are thus initiated to let the model solution capture the dynamics of the data, without having to let the parameters timely unobservable. We use the data of annually recorded hospitalized cases in the city of Semarang, Central Java, Indonesia from 2009 to 2014~\cite{DS14}. Due to such relatively short time horizon inherent from the data, we consider it inappropriate to model the whole data, where otherwise gives insensible prediction. We instead bring both the data and model to their steady states. The former was done by averaging the age-structured data in time, while the the latter is a feasible task. A model for the infection rate is thus inspired by the curve of this steady state of the data. A core iterative method for solving the steady state equation is done for a certain set of values to unobservable parameters, where the solution need not assimilate the steady state data. Optimal values for the parameters are thus found with respect to the variation of this set in the iterative manner. Further, an SIR counterpart is tested for a performance comparison, wherein the same procedure for finding optimal parameters is applied.

\section{Mathematical modeling}

\subsection{An age-structured SISUV model}

Let $p(t,x)=s(t,x)+i(t,x)$ denote the total host population at time $t$ and of age $x$, subdivided into the susceptible and infected class. The human mortality rate $\mu(x)$ is obviously dependent on age, however is assumed to be independent of the infection status. The maximal age of hosts is denoted by $A$. The epidemic investigated here assumes no vertical transmission, i.e.~all newborn hosts are susceptible. The corresponding birth rate of the hosts is denoted by $b(x)$. The disease is transmitted from an infected vector to a susceptible host of age $x$ at a certain age-dependent infection rate $\vartheta(x)$, assumed to be continuous. According to \cite{EV98}, such infection rate folds terms such as mosquito biting rate and transmission rate, where the latter represents the probability that a single mosquito bite leads to a successful infection. The biting rate depends on host mobility that exposes them to vector agglomeration sites, while the transmission rate depends on host immunity that has been shown to be dependent on age~\cite{JC10}. We assume that infected hosts of any age lose immunity at a uniform rate $\gamma$. Regarding the vector population, age structure is ignorable due to a shorter lifespan of the vectors as compared to that of the hosts. The time-varying population $M(t)=u(t)+v(t)$ comprises the susceptible and virus-carrying class. For simplicity, we assume that $M$ follows the logistic-like equation $\text{d}M\slash\text{d}t=\lambda-\rho M$, where $\lambda,\rho$ denote the number of newborn vectors per unit time (recruitment rate) and vector mortality rate, respectively. At a rate $\sigma(y)$, a virus is transmitted from an infected host of age $y$ to a susceptible vector. The preceding description leads us to the following coupled PDE--ODE model
\begin{subequations}
\begin{align}
	\pder_t s(t,x) + \pder_x s(t,x) &= -\frac{1}{N} \vartheta(x) s(t,x) v(t) - \mu(x) s(t,x)+\gamma i(t,x),\\
	\pder_t i(t,x) + \pder_x i(t,x) &= \frac{1}{N} \vartheta(x) s(t,x) v(t) - (\mu(x)+\gamma) i(t,x),\\
	\frac{\text{d}}{\text{d}t}u(t) &= \lambda- u(t)  \int_0^{A} \sigma(y) i(t,y)\, \text{d}y 
		-\rho u(t),\\
	\frac{\text{d}}{\text{d}t}v(t) &= u(t)\int_0^{A} \sigma(y) i(t,y)\, \text{d}y-\rho v(t).
\end{align}
\end{subequations}	
equipped by the initial condition
\begin{equation*}
s(t,0) = \int_0^{A} b(y) p(t,y)\, \text{d}y,\quad i(t,0)=0,\quad u(0)=u_0,\quad v(0)=v_0.
\end{equation*}

The scaling factor $\frac{1}{N}$ in the model equations captures the theoretical view of realistic mass action in manipulating transmission of infection according to population size~\cite{JDH95}. The total host population itself satisfies the PDE
\begin{alignat}{2}
	\pder_t p(t,x) + \pder_x p(t,x) &= -\mu(x) p(t,x), 
		\qquad& p(t,0) &= \int_0^{A} b(y) p(t,y)\,\text{d}y. \label{E:PDEp}
\end{alignat}
In the sequel, we consider the host population in the steady state. In particular, we assume a constant total host population $N:=\int_0^{A} p(x)\, \text{d}x$. Ignoring the time variation in \eqref{E:PDEp}, we arrive at the explicit solution
\begin{equation}\label{E:mu}
	p(x) = N_0 \exp \lb -\int_0^x \mu(y)\, \text{d}y \rb
\end{equation}
where $N_0=p(0)=\int_0^{A} b(y) p(y)\, \text{d}y$ denotes the total number of newborns. Relying on its formulation, it is quite reasonable to designate that $p$ is a positive continuous function. 

\subsection{Model reduction}
For the sufficiently large time scale undertaken, we assume that the vector population cannot vary significantly within one unit time. Under a settling situation, the population ideally portrays the inherent nontrivial equilibrium. Due to its appearance in the host dynamics, we particularly are interested in the virus-carrying state
\begin{equation*}
	v^\ast=\frac{\lambda}{\rho} \frac{J}{(J+\rho)}=  \frac{\lambda}{\rho^2}J \lb 1+ \frac{J}{\rho}\rb^{-1}	 \qquad
	\text{where} \quad  J=\int_0^{A} \sigma(y)i(y)\, \text{d}y. 
\end{equation*}
For the reason that hospitalized hosts $i(y)$ are isolated in hygienic rooms, the presence of vectors is considered less likely. The biting rate of susceptible vectors to those hosts is therefore negligible, inducing $\sigma(y)\simeq 0$ or $J/\rho\ll 1$. One can naturally neglect higher order terms of $J/\rho$ in the Taylor expansion of $v^\ast$ and obtain
\begin{equation}
	v^\ast \simeq \frac{\lambda}{\rho^2}J=\frac{\lambda}{\rho^2} \int_0^{A} \sigma(y) i(y)\, \text{d}y
	= \frac{\lambda\, \tilde{\sigma}}{\rho^2} \int_0^{A} i(y)\, \text{d}y
\end{equation}
for some intermediate $\tilde{\sigma}=\sigma(\tilde{y})$ with $\tilde{y}\in (0,A)$. Using the equilibrium $v^\ast$ together with $s(x)=p(x)-i(x)$ and setting $w(x):=i(x)/p(x)$, we arrive at the equilibrium of the host dynamics 
\begin{subequations}
\label{E:equiSI-PDE}
\begin{alignat}{2}
	w' &= - \gamma w + \theta(x) (1-w) Q(w), \qquad & w(0)&=0, \label{E:ODE:w}\\
	Q(w) &=\frac{1}{N} \int_0^{A} p(y) w(y)\, \text{d}y.
\end{alignat}
\end{subequations}
Here, the new infection rate is defined as $\theta(x) :=\vartheta(x)\lambda\tilde{\sigma}/\rho^2$.

\subsection{Models for the total population and infection rate}

\begin{figure}[htbp!]
\centerline{\includegraphics[width=.67\textwidth]{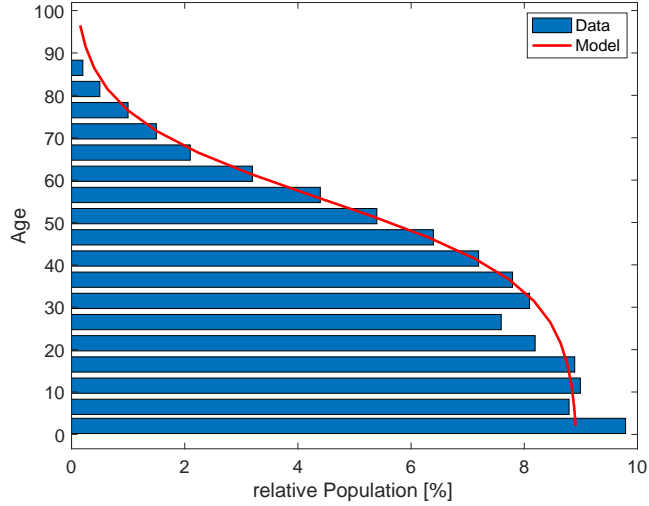}}
\caption{\label{F:agepyr} Data (year 2016; summed over both sexes) for the age pyramid (blue) and fitted parametric model~\eqref{E:pinfty} (red).}
\end{figure}

The data of hospitalized cases are recorded from the city of Semarang, Indonesia. To accompany the data assimilation using these data, we first fit a parametric model
\begin{equation}
	\label{E:pinfty}
	p(x;k_0,k_1,k_2) = \frac{k_0}{1+e^{k_1(x-k_2)}}
\end{equation}
to available population data with age structure in Indonesia, see~\cite{PPI16}. A comparison of the age pyramid in the year 2016 with our model~\eqref{E:pinfty} is shown in Figure~\ref{F:agepyr}. The fitted parameter values are $k_0 = 0.0181\cdot N$, where $N=1.6\cdot 10^6$ equals the total population of Semarang, and $k_1=0.0984$, and $k_2=55.3$. Based on this model, we can set 
\begin{equation}
A=100
\end{equation}
as the maximal age; the population share older than $A$ is thus negligible. The mortality rate $\mu(x)$ in a stationary situation can be obtained from~\eqref{E:mu}:
\begin{equation*}
	\mu(x) = - \frac{\text{d}}{\text{d}x} \ln p(x;k) 
		= \frac{k_1 e^{k_1(x-k_2)}}{1+e^{k_1(x-k_2)}}
		= \frac{k_1}{1+e^{-k_1(x-k_2)}}.
\end{equation*}

For the age-dependent transmission rate $\theta$, we assume that individuals of age around a certain reference value $x_p$ are at higher risk of getting infected than others. We consider a model for $\theta$ as a background transmission rate $\theta_0$ superimposed by a Gaussian peak of height $\theta_1$ centered around $x_p$ and with a width given by the parameter $\sigma$:
\begin{equation}
\label{E:thetaGauss}
	\theta(x;\theta_0, \theta_1, x_p, \sigma) = \theta_0 + \theta_1\cdot e^{-(x-x_p)^2/(2\sigma^2)}.
\end{equation}   
The parameters $\theta_0, \theta_1$ and $\sigma$ are naturally assumed to be positive. The reference value $x_p$ might also be negative indicating that the transmission rate is maximal for new borns and decreases with age. Despite its simple structure, the Gaussian peak model~\eqref{E:thetaGauss} is able to model the observed age distribution of dengue cases in the city of Semarang with quite high accuracy.

In Fig.~\ref{F:agestat} we show the stationary distribution of the infected compartment $i(x)=w(x) p(x)$ (in solid blue) for a given infection rate $\theta(x)$ (in dashed red).

\begin{figure}[htb]
\centerline{\includegraphics[width=0.67\textwidth]{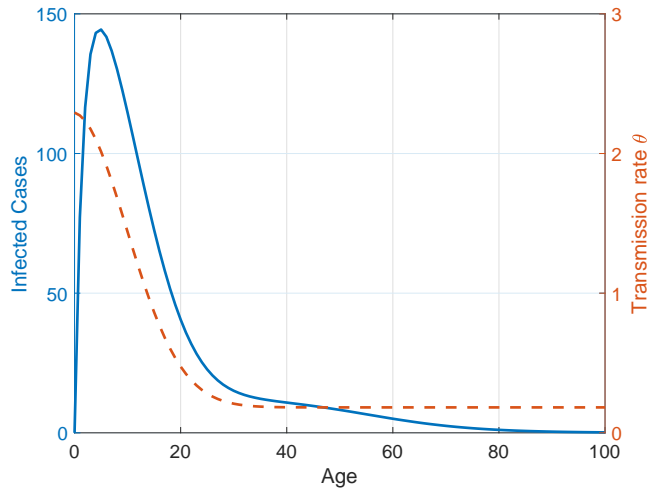}}
\caption{\label{F:agestat} Typical shape of the stationary age distribution of the infected class (solid blue) and the underlying transmission rate $\theta$ (dashed red).}
\end{figure}

\section{Fixed point equation for the equilibrium}

In this section we analyze the solvability of the equilibrium problem~\eqref{E:equiSI-PDE}. 
To prove the existence of equilibria and convergence for an according iteration scheme in a rather generalized setting, we meet the following minimal assumptions on the coefficients and parameters:
\begin{enumerate}
\item Let $\gamma, A>0$; i.e.~the rate of loss of infection and the maximal age are positive.
\item Let $\theta, p:[0,A]\to \R_+$ be continuous and positive functions.
\item Let $\frac{1}{N}\int_0^A p(x)\, dx=1$, i.e.~the population is normalized.
\end{enumerate}

For brevity, we also introduce the following notations
\begin{align*}
	\theta_{\max} &:= \max_{x\in [0,A]} \theta(x),\\
	\Lambda(x,s) &:= \int_s^x \theta(t)\, dt,\\
	k(x,s) &:= \frac{p(x)}{N}\cdot \theta(s)\cdot e^{-\gamma(x-s)} > 0,
\intertext{and the triangle}
	\Omega &= \lcb (x,s)\in [0,A]^2:\, 0\le s\le x \le A\rcb \subset \R^2.
\end{align*}
Note that $\Lambda(x,s) > 0$ for $(x,s)\in \Omega$.

In analogy to classical epidemiological models, we define the \emph{basic reproductive number} as
\begin{equation}
	\calR_0 = \iint_\Omega k(x,s)\, d^2(s,x) >0.
\end{equation}

With these preparations we can now list preliminary lemmata.

\begin{lemma}
For any $Q\in \R$, the solution $w(x;Q)$ of the ODE~\eqref{E:ODE:w} is given by
\begin{equation}
\label{E:w}
	w(x) = w(x;Q) 
		= Q e^{-\gamma x} \int_0^x e^{\gamma s}\, \theta(s) e^{-Q\Lambda(x,s)}\, ds\;.
\end{equation}
The parameter $Q$ itself satisfies the scalar fixed point equation
\begin{equation}
\label{E:FPEQ}
	Q = \frac{1}{N} \int_0^A p(x)\cdot w(x;Q)\, dx 
		= Q \iint_\Omega k(x,s)\cdot e^{-Q\Lambda(x,s)}\, d^2(s,x).
\end{equation}
\end{lemma}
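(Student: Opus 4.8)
The plan is to observe that, although the full system \eqref{E:equiSI-PDE} couples $w$ to the functional $Q(w)$, fixing $Q \in \R$ as a constant decouples the problem: equation \eqref{E:ODE:w} becomes the \emph{linear} inhomogeneous first-order ODE
\begin{equation*}
w'(x) + \lb \gamma + Q\,\theta(x)\rb w(x) = Q\,\theta(x), \qquad w(0) = 0.
\end{equation*}
First I would solve this by the integrating-factor method. The integrating factor $m(x) := \exp\lb \gamma x + Q\int_0^x \theta(t)\, dt\rb$ satisfies $m(0)=1$ and turns the equation into $(m\,w)' = Q\,\theta\,m$; integrating from $0$ to $x$ and using $w(0)=0$ yields $w(x) = m(x)^{-1}\, Q\int_0^x \theta(s)\,m(s)\, ds$. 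Writing $m(s)/m(x) = e^{-\gamma(x-s)}\,e^{-Q\Lambda(x,s)}$, with $\Lambda(x,s)=\int_s^x \theta(t)\, dt$, reproduces the closed form \eqref{E:w}.

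For the fixed-point equation I would impose the consistency condition that the constant $Q$ equals the functional value it was meant to represent, namely $Q = \frac{1}{N}\int_0^A p(x)\,w(x;Q)\, dx$. Substituting \eqref{E:w} gives a double integral over $\{0\le s\le x\le A\}$ whose integrand is $\frac{p(x)}{N}\,\theta(s)\,e^{-\gamma(x-s)}\,e^{-Q\Lambda(x,s)} = k(x,s)\,e^{-Q\Lambda(x,s)}$. Pulling the common factor $Q$ out front then gives precisely the claimed right-hand side $Q\iint_\Omega k(x,s)\,e^{-Q\Lambda(x,s)}\, d^2(s,x)$ of \eqref{E:FPEQ}.

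Honestly, there is no serious obstacle here; the statement is essentially a direct computation. The one step deserving a word of care is the identification of the iterated integral with the integral over the triangle $\Omega$, which I would justify by Tonelli's theorem: under assumptions (1)--(3) the integrand is continuous, hence bounded, on the compact triangle $\Omega$, so the iterated and double integrals agree and the order of integration may be exchanged freely. The remainder is purely algebraic rearrangement of the exponential factors into the single term $e^{-Q\Lambda(x,s)}$.
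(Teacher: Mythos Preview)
Your proposal is correct; the paper itself omits the proof of this lemma entirely, treating it as a direct computation. Your integrating-factor argument is the standard (and essentially only) route to the closed form \eqref{E:w}, and your derivation of \eqref{E:FPEQ} by substituting and recognizing $k(x,s)$ is exactly what is intended.
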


\begin{lemma}
\label{L:w1}
Let $Q>0$. Then $0<w(x;Q)<1$ for all $x>0$ and $w(x;Q)=0$ if and only if $x=0$. Furthermore, $Q>0$ implies $Q<1$.
\end{lemma}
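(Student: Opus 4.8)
The plan is to read everything off the closed form~\eqref{E:w} and the scalar fixed point equation~\eqref{E:FPEQ}, treating $Q>0$ as a frozen parameter. The positivity and the zero-characterisation come for free: for $x>0$ the integrand $e^{\gamma s}\,\theta(s)\,e^{-Q\Lambda(x,s)}$ in~\eqref{E:w} is strictly positive on $(0,x)$ by assumption~(2), and the prefactor $Q e^{-\gamma x}$ is positive, so $w(x;Q)>0$; at $x=0$ the integral degenerates to a point, giving $w(0;Q)=0$. Hence $w(x;Q)=0$ if and only if $x=0$, and $w(x;Q)>0$ for every $x>0$.

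The only substantial point is the upper bound $w(x;Q)<1$, which I would obtain by a direct estimate on~\eqref{E:w}. Since $\gamma>0$ and $0\le s\le x$, we have $e^{-\gamma(x-s)}\le 1$, with strict inequality off the endpoint $s=x$, so
\begin{equation*}
  w(x;Q) = Q\int_0^x e^{-\gamma(x-s)}\,\theta(s)\,e^{-Q\Lambda(x,s)}\,ds < Q\int_0^x \theta(s)\,e^{-Q\Lambda(x,s)}\,ds .
\end{equation*}
I would then substitute $u=\Lambda(x,s)$: because $\partial_s\Lambda(x,s)=-\theta(s)$, $\Lambda(x,x)=0$ and $\Lambda(x,0)=\int_0^x\theta$, the last integral becomes $Q\int_0^{\Lambda(x,0)}e^{-Qu}\,du = 1-e^{-Q\Lambda(x,0)}$. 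For $x>0$ one has $\Lambda(x,0)>0$, whence $w(x;Q)<1-e^{-Q\Lambda(x,0)}<1$. Equivalently, one may argue by forward invariance of $\{w<1\}$ for the ODE~\eqref{E:ODE:w} with $Q$ frozen: at a first crossing $w=1$ the right-hand side equals $-\gamma<0$, contradicting $w(0)=0<1$; this avoids the substitution at the cost of a first-crossing argument.

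For the final claim I would evaluate the right-hand side of the fixed point equation~\eqref{E:FPEQ}. Using $p>0$, the strict bound $w(x;Q)<1$ just established (valid on the positive-measure set $(0,A]$), and the normalisation $\tfrac1N\int_0^A p(x)\,dx=1$ from assumption~(3), we get
\begin{equation*}
  \frac1N\int_0^A p(x)\,w(x;Q)\,dx < \frac1N\int_0^A p(x)\,dx = 1
\end{equation*}
for every $Q>0$. Since any solution of~\eqref{E:FPEQ} satisfies $Q=\tfrac1N\int_0^A p(x)\,w(x;Q)\,dx$, every such positive $Q$ therefore obeys $Q<1$.

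I expect the upper bound $w<1$ to be the only real obstacle: positivity, the zero-characterisation, and $Q<1$ are each a one-line consequence once that bound is in hand. The substitution route is the cleaner of the two, since it yields the sharp intermediate estimate $w(x;Q)<1-e^{-Q\Lambda(x,0)}$ rather than merely $w<1$, which may also prove convenient for the monotonicity and contraction arguments in the subsequent existence proof.
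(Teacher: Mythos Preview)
Your proof is correct. The positivity and zero-characterisation are indeed immediate from the closed form~\eqref{E:w}; the substitution $u=\Lambda(x,s)$ works as you describe and yields the sharp bound $w(x;Q)<1-e^{-Q\Lambda(x,0)}$; and the final inequality $Q<1$ follows at once from strict upper bound $w<1$ on a set of full measure, the positivity of $p$, and the normalisation in assumption~(3).

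There is nothing to compare with: the paper states Lemma~\ref{L:w1} without proof and moves directly to the next lemma. Your argument fills the gap cleanly. Of the two routes you mention for $w<1$, either is adequate here; the ODE first-crossing argument (at $w=1$ one has $w'=-\gamma<0$) is closer in spirit to the maximum-principle reasoning the paper uses in the subsequent lemma, while your substitution gives the quantitatively sharper intermediate estimate.
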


Solutions of the SI model~\eqref{E:equiSI-PDE} or the fixed point problem~\eqref{E:FPEQ}, which are of significant practical relevance require nonnegative populations, i.e. $w,Q\ge 0$. Later on, we also need the following estimate for $Q=1$.
\begin{lemma}
For all $x\in [0,A]$, it holds that $w(x;1) < 1 - \frac{\gamma}{\gamma+\theta_{\max}}$.
\end{lemma}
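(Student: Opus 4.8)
The right-hand side can be rewritten as $1 - \frac{\gamma}{\gamma+\theta_{\max}} = \frac{\theta_{\max}}{\gamma+\theta_{\max}}$, and this is precisely the equilibrium value of the autonomous ODE obtained from \eqref{E:ODE:w} with $Q=1$ by freezing the transmission rate at its maximum. This observation dictates the plan: rather than estimating the explicit integral \eqref{E:w} directly, I would compare the genuine solution $w(\cdot;1)$, which solves $w' = -\gamma w + \theta(x)(1-w)$ with $w(0)=0$, against the solution of the constant-coefficient problem in which $\theta(x)$ is replaced by $\theta_{\max}$.

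Concretely, introduce the auxiliary function $\tilde{w}$ solving $\tilde{w}' = -(\gamma+\theta_{\max})\tilde{w} + \theta_{\max}$ with $\tilde{w}(0)=0$, whose closed form is
\[
	\tilde{w}(x) = \frac{\theta_{\max}}{\gamma+\theta_{\max}}\bigl(1 - e^{-(\gamma+\theta_{\max})x}\bigr).
\]
Since $e^{-(\gamma+\theta_{\max})x} > 0$ for every finite $x$, one has $\tilde{w}(x) < \frac{\theta_{\max}}{\gamma+\theta_{\max}}$ on all of $[0,A]$; it therefore suffices to prove the comparison $w(x;1) \le \tilde{w}(x)$. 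To this end, set $d(x) := \tilde{w}(x) - w(x;1)$, so that $d(0)=0$, and a short rearrangement of the two right-hand sides yields the linear equation $d' = -(\gamma+\theta_{\max})\, d + \bigl(\theta_{\max}-\theta(x)\bigr)\bigl(1-w(x;1)\bigr)$. Variation of constants then gives $d(x) = \int_0^x e^{-(\gamma+\theta_{\max})(x-\tau)}\bigl(\theta_{\max}-\theta(\tau)\bigr)\bigl(1-w(\tau;1)\bigr)\, d\tau$, and it remains to check that the integrand is nonnegative.

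This nonnegativity is the one place where the argument is not merely mechanical, and it is where the earlier results must be invoked: the factor $\theta_{\max}-\theta(\tau)\ge 0$ holds by definition of $\theta_{\max}$, while $1-w(\tau;1) > 0$ is supplied by Lemma~\ref{L:w1} applied with $Q=1>0$. Thus $d \ge 0$, so $w(x;1)\le \tilde{w}(x) < \frac{\theta_{\max}}{\gamma+\theta_{\max}}$ for every $x\in[0,A]$ (the case $x=0$ being immediate since $w(0;1)=0$), which is the claim. I expect the only subtlety to be making the sign of the forcing term rigorous via the previously established bound $w<1$; the strictness of the final inequality then comes for free from the fact that $\tilde{w}$ never attains its equilibrium at finite age.

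As an alternative that bypasses the comparison lemma, one could argue directly from \eqref{E:w}: since $\partial_s\bigl(e^{-\Lambda(x,s)}\bigr) = \theta(s)\,e^{-\Lambda(x,s)}$, integrating \eqref{E:w} by parts against the factor $e^{\gamma s}$ produces an exact identity $w(x;1) = 1 - e^{-\gamma x - \Lambda(x,0)} - \gamma e^{-\gamma x}\int_0^x e^{\gamma s - \Lambda(x,s)}\, ds$, after which the crude bound $\Lambda(x,s)\le \theta_{\max}(x-s)$ and an elementary integral recover the same estimate. I would nonetheless prefer the comparison-principle route, as it is shorter and makes the role of the frozen-coefficient equilibrium transparent.
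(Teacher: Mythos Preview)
Your argument is correct, but it is not the route the paper takes. The paper gives a two-line critical-point argument: since $w(\cdot;1)$ is continuous on $[0,A]$ with $w(0;1)=0$, it attains a maximum at some $\xi$; at that point $w'(\xi;1)\le 0$ (equality if $\xi$ is interior), and the ODE $w'=-(\gamma+\theta(x))w+\theta(x)$ then forces $w(\xi;1)\le \dfrac{\theta(\xi)}{\gamma+\theta(\xi)}<\dfrac{\theta_{\max}}{\gamma+\theta_{\max}}$. Your comparison-principle proof instead dominates $w(\cdot;1)$ by the explicit frozen-coefficient solution $\tilde w$ and reads off the strict bound from $\tilde w(x)<\theta_{\max}/(\gamma+\theta_{\max})$. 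The paper's approach is shorter and avoids introducing an auxiliary ODE, but it tacitly needs a word about the boundary case $\xi=A$ and about strictness when $\theta(\xi)=\theta_{\max}$; your route is slightly longer yet handles strictness uniformly (it comes from $e^{-(\gamma+\theta_{\max})x}>0$) and makes transparent why the bound is exactly the equilibrium of the constant-coefficient problem. Your alternative via integration by parts of~\eqref{E:w} is also valid and lands on the same estimate.
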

\begin{proof}
Due to Lemma~\ref{L:w1}, we know that $w$ is bounded by $1$ from above. Suppose that $w(x;1)$ attains a maximum at $x=\xi$, i.e.~$w'(\xi;1)=-(\gamma+\theta(\xi)) w(\xi;1) + \theta(\xi)=0$. Then
\begin{equation*}
	w(x;1)\le w(\xi;1) = \frac{\theta(\xi)}{\gamma+\theta(\xi)} = 1- \frac{\gamma}{\gamma+\theta(x)} < 1 - \frac{\gamma}{\gamma+\theta_{\max}}.
\end{equation*}
\end{proof}

\subsection{Existence of equilibria}

We define the mapping $F:\R\to \R$ by
\begin{equation}
\label{E:DefF}
	F(Q) := 1-\iint_\Omega k(x,s)\cdot e^{-Q\Lambda(x,s)}\, d^2(s,x).
\end{equation}
Solutions to the fixed point problem~\eqref{E:FPEQ} are either given by the trivial solution $Q=0$ or satisfy 
\begin{equation*}
	F(Q) = 0.
\end{equation*}

\begin{lemma}
\label{L:F0:F1}
The function $F$ satisfies $F(0)=1-\calR_0$ and $F(1)>\dfrac{\gamma}{\gamma+\theta_{\max}}>0$.
\end{lemma}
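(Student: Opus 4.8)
The first identity is immediate, so I would dispatch it in one line: setting $Q=0$ in the definition~\eqref{E:DefF} kills the exponential factor, since $e^{-0\cdot\Lambda(x,s)}=1$, and what remains is exactly $F(0)=1-\iint_\Omega k(x,s)\,d^2(s,x)=1-\calR_0$ by the defining formula for $\calR_0$.

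For the bound at $Q=1$ the plan is to reinterpret the double integral occurring in $F(1)$ as a population average of $w(\cdot\,;1)$. The key observation is that the second equality in the fixed-point equation~\eqref{E:FPEQ} is not merely a condition singling out fixed points, but an identity valid for \emph{every} $Q$: it follows by inserting the explicit formula~\eqref{E:w} for $w(x;Q)$ into $\tfrac1N\int_0^A p(x)\,w(x;Q)\,dx$ and exchanging the order of integration over the triangle $\Omega$ by Fubini. Specializing to $Q=1$ yields
\begin{equation*}
	\iint_\Omega k(x,s)\,e^{-\Lambda(x,s)}\,d^2(s,x)=\frac1N\int_0^A p(x)\,w(x;1)\,dx,
\end{equation*}
and therefore $F(1)=1-\tfrac1N\int_0^A p(x)\,w(x;1)\,dx$.

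From here I would invoke the immediately preceding lemma, which supplies the uniform pointwise bound $w(x;1)<1-\tfrac{\gamma}{\gamma+\theta_{\max}}$ on $[0,A]$. Because $p$ is positive, multiplying by $p(x)/N$ and integrating preserves the strict inequality, and the normalization assumption $\tfrac1N\int_0^A p(x)\,dx=1$ collapses the right-hand side to the constant $1-\tfrac{\gamma}{\gamma+\theta_{\max}}$. Subtracting from $1$ then gives $F(1)>\tfrac{\gamma}{\gamma+\theta_{\max}}$, and positivity of this bound is clear from $\gamma>0$ and $\theta_{\max}>0$ (the latter since $\theta$ is positive and continuous on the compact interval $[0,A]$).

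The only genuine step is the realization that the middle equality in~\eqref{E:FPEQ} holds identically in $Q$, so that $F(1)$ can be rewritten as a $p$-weighted average; once that is in hand, the estimate is a one-line consequence of the preceding lemma together with the normalization. I expect no real obstacle, as the Fubini interchange is routine: all integrands are continuous and the domain $\Omega$ is compact, so the iterated integrals coincide and the inequality $w(x;1)<1-\tfrac{\gamma}{\gamma+\theta_{\max}}$ transfers to the averaged quantity without difficulty.
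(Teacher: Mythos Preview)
Your proposal is correct and follows essentially the same route as the paper: both arguments rewrite $F(1)$ as $1-\tfrac1N\int_0^A p(x)\,w(x;1)\,dx$, invoke the pointwise bound on $w(\cdot;1)$ from the preceding lemma, and finish with the normalization of $p$. You are in fact more explicit than the paper about \emph{why} $F(1)$ equals that population average---namely that the second equality in~\eqref{E:FPEQ} is an identity in $Q$ obtained via Fubini---whereas the paper simply writes down that rewriting without comment.
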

\begin{proof}
The statement for $F(0)$ is obvious and the other is due to 
\begin{align*}
	F(1) &= 1-\frac{1}{N} \int_0^A p(x)\, w(x;1)\, dx \\
		& > 1- \frac{1}{N} \int_0^A p(x) \lb 1-\frac{\gamma}{\gamma+\theta_{\max}}\rb\, dx 
		= \frac{\gamma}{\gamma+\theta_{\max}} \cdot \frac{1}{N} \int_0^A p(x)\, dx \\
		&= \frac{\gamma}{\gamma+\theta_{\max}}.
\end{align*}
\end{proof}

\begin{lemma}
\label{L:dF}
The function $F$ is differentiable and satisfies for all $Q\in (0,1]$:
\begin{equation}
	0\le F'(Q) = \iint_\Omega k(x,s)\cdot \Lambda(x,s)\cdot e^{-Q\Lambda(x,s)}\, d^2(s,x)
	< \calR_0 \frac{e^{-1}}{Q} \simeq \calR_0 \frac{0.37}{Q}.
\end{equation}
\end{lemma}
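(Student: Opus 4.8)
The plan is to obtain $F'$ by differentiating under the integral sign, read off its sign directly from the integrand, and then reduce the upper bound to a one-variable extremal problem. First I would justify the interchange of differentiation and integration: the integrand $(Q,x,s)\mapsto k(x,s)\,e^{-Q\Lambda(x,s)}$ is jointly continuous on $[0,1]\times\Omega$, and so is its $Q$-partial derivative $-k(x,s)\,\Lambda(x,s)\,e^{-Q\Lambda(x,s)}$. Since $\Omega$ is compact, the standard theorem on parameter-dependent integrals applies and yields differentiability of $F$ together with
\begin{equation*}
	F'(Q) = \iint_\Omega k(x,s)\,\Lambda(x,s)\,e^{-Q\Lambda(x,s)}\, d^2(s,x).
\end{equation*}
The nonnegativity $F'(Q)\ge 0$ is then immediate, because on $\Omega$ the factors $k(x,s)>0$, $\Lambda(x,s)>0$ and the exponential are all positive, so the integrand is nonnegative.

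For the upper bound I would estimate the factor $\Lambda\,e^{-Q\Lambda}$ pointwise. Setting $u:=Q\,\Lambda(x,s)>0$, one has $\Lambda\,e^{-Q\Lambda}=\tfrac{1}{Q}\,u\,e^{-u}$, and the scalar map $u\mapsto u\,e^{-u}$ attains its global maximum $e^{-1}$ at $u=1$. Hence $\Lambda\,e^{-Q\Lambda}\le \tfrac{e^{-1}}{Q}$ on $\Omega$; multiplying by $k$ and integrating gives the (non-strict) bound $F'(Q)\le \tfrac{e^{-1}}{Q}\,\calR_0$.

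Upgrading this to the strict inequality is the step I expect to be the main obstacle. The pointwise bound $u\,e^{-u}<e^{-1}$ is strict precisely when $u\ne 1$, i.e.\ away from the level set $\lcb (x,s)\in\Omega:\Lambda(x,s)=1/Q\rcb$. Since $\theta>0$, the gradient $\nabla\Lambda=(\theta(x),-\theta(s))$ never vanishes, so by the implicit function theorem this level set is a smooth curve and therefore has two-dimensional Lebesgue measure zero (and it is empty, with the inequality then trivially strict, whenever $1/Q$ exceeds the range of $\Lambda$). Off this curve the integrand satisfies $k\,\Lambda\,e^{-Q\Lambda}<k\,\tfrac{e^{-1}}{Q}$ strictly, and because $k>0$ on a set of positive measure, integrating the strict inequality over the complement of the curve delivers $F'(Q)<\tfrac{e^{-1}}{Q}\,\calR_0$. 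The only delicate point is this measure-zero argument, which is what converts the single-point maximizer of $u\,e^{-u}$ into a negligible exceptional set in $\Omega$; everything else follows directly from the explicit form of $F'$.
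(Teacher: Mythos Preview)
Your argument is correct and follows the paper's approach exactly: both bound the scalar function $y\mapsto y\,e^{-Qy}$ by its maximum $e^{-1}/Q$ and then integrate against $k$ to obtain $\calR_0\,e^{-1}/Q$. In fact you are more careful than the paper, which only writes the non-strict inequality $F'(Q)\le \calR_0\,e^{-1}/Q$ in its proof and does not supply the measure-zero level-set argument you give to upgrade it to the strict bound asserted in the statement.
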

\begin{proof}
The nonnegativity of $F'$ is obvious. On the other hand, the function $g_1(y;Q):=y\cdot e^{-Qy}$ is bounded with respect to~$y$ by $0\le g_1(y;Q)\le \frac{e^{-1}}{Q}$, where the maximum is attained for $y=1/Q$. Therefore,
\begin{equation*}
	F'(Q) = \iint_\Omega k(x,s)\cdot \Lambda(x,s)\cdot e^{-Q\Lambda(x,s)}\, d^2(s,x)
	\le \frac{e^{-1}}{Q} \iint_\Omega k(x,s)\, d^2(s,x) = \calR_0 \frac{e^{-1}}{Q}\;.
\end{equation*}
\end{proof}

\begin{theorem}[Existence of solutions] For the fixed point problem~\eqref{E:FPEQ}, it holds that
\begin{enumerate}
\item For $0<\calR_0 \le 1$, there exists \emph{only} the \emph{trivial solution} $Q=0$.
\item For $\calR_0>1$, there exists additionally a \emph{unique non-trivial} solution $Q^\ast\in (0,1)$ satisfying $F(Q^\ast)=0$.
\end{enumerate}
\end{theorem}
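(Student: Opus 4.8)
The plan is to reduce the entire statement to the elementary analysis of the single scalar function $F$ defined in~\eqref{E:DefF}, feeding on the three preceding lemmata as the main inputs. First I would record that a \emph{non-trivial} solution of~\eqref{E:FPEQ} is precisely a positive zero of $F$: dividing~\eqref{E:FPEQ} by $Q\neq 0$ yields $1=\iint_\Omega k(x,s)\,e^{-Q\Lambda(x,s)}\,d^2(s,x)$, i.e.\ $F(Q)=0$, exactly as already noted before Lemma~\ref{L:F0:F1}. Moreover, by Lemma~\ref{L:w1} every positive solution satisfies $Q<1$, so it suffices to examine the zeros of $F$ in the interval $(0,1)$.

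The second, and really the only delicate, step is to upgrade the weak inequality $F'(Q)\ge 0$ from Lemma~\ref{L:dF} to \emph{strict} monotonicity. The integrand $k(x,s)\,\Lambda(x,s)\,e^{-Q\Lambda(x,s)}$ in the formula for $F'(Q)$ is strictly positive on the interior of $\Omega$, since $k>0$, the exponential factor is positive, and $\Lambda(x,s)>0$ there (as remarked after the definition of $\Lambda$). Because the interior of $\Omega$ carries positive two-dimensional measure, the double integral is strictly positive, so $F'(Q)>0$ for every $Q$ and $F$ is strictly increasing on $\R$. This is the point I expect to need the most care, as the quoted lemma only records $F'\ge 0$; establishing the strict sign is what ultimately delivers uniqueness.

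With strict monotonicity in hand, both assertions follow from Lemma~\ref{L:F0:F1} together with the intermediate value theorem. For part~(1), when $0<\calR_0\le 1$ we have $F(0)=1-\calR_0\ge 0$, and strict monotonicity forces $F(Q)>F(0)\ge 0$ for every $Q>0$; hence $F$ admits no positive zero and the only solution of~\eqref{E:FPEQ} is the trivial one $Q=0$. For part~(2), when $\calR_0>1$ we have $F(0)=1-\calR_0<0$ while $F(1)>\frac{\gamma}{\gamma+\theta_{\max}}>0$; continuity of $F$ and the intermediate value theorem then produce a root $Q^\ast\in(0,1)$, and strict monotonicity on all of $\R$ guarantees this zero is unique, which completes the argument.
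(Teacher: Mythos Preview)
Your argument is correct and follows the same route as the paper: reduce to the scalar function $F$, use Lemma~\ref{L:F0:F1} for the sign of $F(0)$ and $F(1)$, and invoke strict monotonicity of $F$ (via Lemma~\ref{L:dF}) together with the intermediate value theorem. You are in fact more explicit than the paper---spelling out why $F'>0$ strictly and citing Lemma~\ref{L:w1} to restrict attention to $(0,1)$---but the approach is identical.
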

\begin{proof}
Thanks to Lemma~\ref{L:F0:F1} and the strict monotonicity of $F$ (cf.~Lemma~\ref{L:dF}) it is obvious, that for $0<\calR_0 \le 1$, there exists no solution of $F(Q)=0$ on the relevant interval $Q\in [0,1]$. On the other hand, for $\calR_0>1$, the strict monotonicity of $F$ guarantees the uniqueness of the non-trivial solution $Q^\ast\in (0,1)$.
\end{proof}

Finally, we consider the behavior of the non-trivial solution $Q^\ast$ in the limit $\calR_0 \to 1^+$. Let $\bar{Q} := -\frac{F(0)}{F(1)-F(0)} = \frac{\calR_0-1}{F(1)+\calR_0-1}$ denote the root of the secant connecting $F(0)$ and $F(1)$.

\begin{theorem}
Let $\calR_0>1$. Then 
\begin{equation*}
	0 < Q^\ast < \bar{Q} < \frac{\calR_0-1}{\frac{\gamma}{\gamma+\theta_{\max}}+\calR_0-1}
		\to 0 \quad \text{for $\calR_0\to 1^+$}.
\end{equation*}
\end{theorem}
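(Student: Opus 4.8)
The plan is to read off all four inequalities from two structural facts about $F$ on $[0,1]$: its \emph{strict concavity}, and the lower bound $F(1) > \gamma/(\gamma+\theta_{\max})$ already supplied by Lemma~\ref{L:F0:F1}. The leftmost inequality $0 < Q^\ast$ is nothing new, being part of the existence theorem, and the limit at the far right is immediate once the explicit bound is established, since $\gamma/(\gamma+\theta_{\max})$ is a fixed positive constant while $\calR_0 - 1 \to 0^+$. Thus the two inequalities carrying the content are $Q^\ast < \bar{Q}$ and $\bar{Q} < (\calR_0-1)/(\tfrac{\gamma}{\gamma+\theta_{\max}} + \calR_0 - 1)$.

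For $Q^\ast < \bar{Q}$ I would first establish strict concavity of $F$ on $[0,1]$. Since $k$ and $\Lambda$ are continuous on the compact triangle $\Omega$ and $e^{-Q\Lambda}$ is smooth in $Q$ with integrand and $Q$-derivatives bounded uniformly on $\Omega\times[0,1]$, differentiation under the integral sign is justified a second time (it was already used in Lemma~\ref{L:dF} to obtain $F'$), giving
\begin{equation*}
	F''(Q) = -\iint_\Omega k(x,s)\,\Lambda(x,s)^2\,e^{-Q\Lambda(x,s)}\, d^2(s,x) < 0,
\end{equation*}
where strictness follows from $k>0$ and $\Lambda(x,s)>0$ on the interior of $\Omega$. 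Hence $F$ lies strictly above the secant $L$ joining $(0,F(0))$ and $(1,F(1))$ throughout the open interval $(0,1)$.

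Now $\bar{Q}$ is by construction the unique zero of $L$, and $L$ is strictly increasing because $F(0)=1-\calR_0<0<F(1)$. Evaluating at the root $Q^\ast$ of $F$ and using strict concavity, $L(Q^\ast) < F(Q^\ast) = 0 = L(\bar{Q})$, and monotonicity of $L$ then forces $Q^\ast < \bar{Q}$. For the remaining inequality I would exploit that $a\mapsto a/(b+a)$ is decreasing in $b>0$ for fixed $a>0$: writing $\bar{Q} = (\calR_0-1)/(F(1)+\calR_0-1)$ and replacing $F(1)$ by its strictly smaller lower bound $\gamma/(\gamma+\theta_{\max})$ enlarges the fraction, yielding the claimed strict bound. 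Letting $\calR_0\to 1^+$ then sends the numerator to $0$ while the denominator stays bounded below by the fixed constant $\gamma/(\gamma+\theta_{\max})$, so the entire chain is squeezed to $0$.

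The only place demanding genuine care is the \emph{direction} of the concavity comparison: one is tempted to assert that the secant root over- or under-estimates the true root without tracking signs. The safeguard is to pin down that $F(0)<0<F(1)$ makes $L$ increasing, so that ``graph above secant'' translates precisely into ``zero of $F$ to the left of the zero of the secant.'' Everything else — the second differentiation under the integral and the elementary monotonicity of $a/(b+a)$ — is routine.
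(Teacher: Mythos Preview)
Your proposal is correct and follows essentially the same approach as the paper: compute $F''(Q)=-\iint_\Omega k\,\Lambda^2 e^{-Q\Lambda}<0$ to obtain strict concavity, deduce $Q^\ast<\bar Q$ from the secant comparison, and then replace $F(1)$ by its lower bound $\gamma/(\gamma+\theta_{\max})$ from Lemma~\ref{L:F0:F1}. You have simply spelled out the secant-direction argument and the monotonicity of $a/(b+a)$ in $b$ more carefully than the paper does.
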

\begin{proof}
Since $F''(Q) = -\iint_\Omega k(x,s)\cdot \Lambda(x,s)^2\, e^{-Q\Lambda(x,s)}\, d^2(s,x)<0$, we  observe that $F$ is strictly concave and hence $0<Q^\ast<\bar{Q}$. Using the estimate $F(1)>\dfrac{\gamma}{\gamma+\theta_{\max}}$, we yield the desired result owing to Lemma~\ref{L:F0:F1}. 
\end{proof}

As $\calR_0$ get larger than $1$, the trivial fixed point $Q=0$ bifurcates and the non-trivial solution $Q^\ast(\calR_0)<\frac{\calR_0-1}{\frac{\gamma}{\gamma+\theta_{\max}}+\calR_0-1}$ emerges. 


\subsection{Convergence of a fixed point iteration scheme}

To solve the original problem~\eqref{E:equiSI-PDE} we propose an iterative scheme:
\begin{enumerate}
\item Given an initial guess $Q^{(0)} > 0$.
\item For $j=1,\dots$ solve the ODE
\begin{equation*}
	\frac{d}{dx} w^{(j)} 
		= - \gamma w^{(j)} + Q^{(j-1)} \theta(x)\lb 1-w^{(j)}\rb\,, 
		\quad w^{(j)}(0)=0\;.
\end{equation*}
\item Set $\displaystyle Q^{(j)} := \frac{1}{N}\int_0^A p(y) w^{(j)}(y)\,dy$.
\end{enumerate}
The initial guess $Q^{(0)}=0$ can be excluded since in this case we directly obtain $w^{(j)}\equiv 0$ and hence $Q^{(j)}=0$ for all $j\in \N$.

Using the explicit solution formula~\eqref{E:w} for $w$ we obtain the iteration
\begin{equation}
\label{E:defT}
	Q^{(j)} = T(Q^{(j-1)}) 
	:= Q^{(j-1)} \iint_\Omega k(x,s) e^{-Q^{(j-1)}\Lambda(x,s)}\, d^2(s,x).
\end{equation}
Here $T:[0,1]\to [0,1]$ is the iteration map related to $F$ by $T(Q)=Q\cdot (1-F(Q))$. It is obvious, that $T(0)=0$ and $T(Q^\ast)=Q^\ast$ provided $Q^\ast$ exists as well as $T(1)=1-F(1)<1$ due to Lemma~\ref{L:F0:F1}.

\begin{lemma}
\label{L:dT}
The iteration map $T$ is differentiable and
\begin{enumerate}
\item $T'(0) = \calR_0$.
\item $T'(Q^\ast) < 1$.
\end{enumerate}
\end{lemma}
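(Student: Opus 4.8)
The plan is to lean entirely on the algebraic identity $T(Q) = Q\,(1-F(Q))$ recorded immediately above the lemma, which transfers all of the work to the already-established properties of $F$. Since Lemma~\ref{L:dF} shows $F$ to be differentiable, the product rule gives differentiability of $T$ together with
\begin{equation*}
	T'(Q) = \lb 1-F(Q)\rb - Q\,F'(Q),
\end{equation*}
and both assertions then reduce to evaluating this formula at a single point.

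For part~(1) I would set $Q=0$: the term $Q\,F'(Q)$ drops out, so $T'(0)=1-F(0)$, and substituting $F(0)=1-\calR_0$ from Lemma~\ref{L:F0:F1} yields $T'(0)=\calR_0$. For part~(2) I would set $Q=Q^\ast$; since $Q^\ast$ is by construction a root of $F$, we have $F(Q^\ast)=0$ and hence $T'(Q^\ast)=1-Q^\ast F'(Q^\ast)$. The desired inequality $T'(Q^\ast)<1$ is therefore equivalent to the strict positivity $Q^\ast F'(Q^\ast)>0$. The existence theorem locates $Q^\ast$ in $(0,1)$, so $Q^\ast>0$ is immediate, and the whole claim hinges on showing $F'(Q^\ast)>0$.

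This strict positivity is the one genuinely delicate point, because Lemma~\ref{L:dF} supplies only $F'(Q)\ge 0$. To sharpen it I would revisit the integral representation
\begin{equation*}
	F'(Q^\ast)=\iint_\Omega k(x,s)\cdot\Lambda(x,s)\cdot e^{-Q^\ast\Lambda(x,s)}\, d^2(s,x)
\end{equation*}
and note that its integrand is continuous, nonnegative, and strictly positive throughout the interior of $\Omega$: indeed $k>0$ everywhere, the exponential never vanishes, and $\Lambda(x,s)=\int_s^x\theta(t)\,dt>0$ as soon as $s<x$ by positivity of $\theta$. Because the interior of the triangle $\Omega$ carries positive two-dimensional Lebesgue measure while the diagonal $s=x$ is a null set, the double integral is strictly positive. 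Combined with $Q^\ast>0$ this gives $Q^\ast F'(Q^\ast)>0$, hence $T'(Q^\ast)<1$, and completes the proof.
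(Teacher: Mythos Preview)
Your argument is correct and follows exactly the route taken in the paper: differentiate the identity $T(Q)=Q(1-F(Q))$ and evaluate at $Q=0$ and $Q=Q^\ast$, reducing both claims to the known values $F(0)=1-\calR_0$, $F(Q^\ast)=0$, and the positivity of $F'$. Your explicit justification that $F'(Q^\ast)>0$ strictly (rather than merely $\ge 0$) is a welcome addition, since the paper invokes this strict inequality without spelling it out.
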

\begin{proof}
The first statement is obvious using $T'(Q)=1-F(Q)-Q\cdot F'(Q)$ and the second one follows from $Q^\ast>0$ and $F'(Q)>0$ for $Q>0$.
\end{proof}

\begin{lemma}
\label{L:dTstrict}
For $\calR_0<2e\simeq 5.47$, we have $-1< T'(Q^\ast)<1$.
\end{lemma}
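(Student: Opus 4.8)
The plan is to evaluate $T'$ directly at the fixed point $Q^\ast$ using the relation $T(Q)=Q\,(1-F(Q))$ recorded just before Lemma~\ref{L:dT}. Differentiating yields $T'(Q)=1-F(Q)-Q\,F'(Q)$, and since $Q^\ast$ is a nontrivial root with $F(Q^\ast)=0$, this collapses to the clean expression $T'(Q^\ast)=1-Q^\ast F'(Q^\ast)$. The entire lemma then reduces to a two-sided control of the single quantity $Q^\ast F'(Q^\ast)$.

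The upper bound $T'(Q^\ast)<1$ is essentially free: since $Q^\ast>0$ and $F'(Q^\ast)>0$ (the nonnegativity in Lemma~\ref{L:dF} together with $Q^\ast>0$), we have $Q^\ast F'(Q^\ast)>0$, so $T'(Q^\ast)<1$. This is exactly the second statement of Lemma~\ref{L:dT}, so the real content is the lower bound $T'(Q^\ast)>-1$, i.e.\ the inequality $Q^\ast F'(Q^\ast)<2$.

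For this I would invoke the sharp pointwise estimate of Lemma~\ref{L:dF}, namely $F'(Q)<\calR_0\,e^{-1}/Q$ for $Q\in(0,1]$. Evaluating at $Q=Q^\ast$ and multiplying by $Q^\ast$, the factor $Q^\ast$ cancels the $1/Q$ in the bound and leaves $Q^\ast F'(Q^\ast)<\calR_0/e$. Under the hypothesis $\calR_0<2e$ this gives $Q^\ast F'(Q^\ast)<2$, hence $T'(Q^\ast)=1-Q^\ast F'(Q^\ast)>-1$, as claimed. Note that Lemma~\ref{L:dF} is applicable at $Q^\ast$ precisely because the existence theorem places the nontrivial root in $(0,1)\subset(0,1]$.

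There is no genuine obstacle once the correct estimate is chosen; the only conceptual point worth flagging is that the threshold $2e$ is dictated exactly by the constant $e^{-1}$ appearing in Lemma~\ref{L:dF}. It is the cancellation of the $Q^\ast$ prefactor against the $1/Q$ singularity in that bound that makes the argument go through uniformly in $Q^\ast$, so one never needs any quantitative information about the location of $Q^\ast$ beyond $Q^\ast\in(0,1)$.
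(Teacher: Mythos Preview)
Your argument is correct and essentially identical to the paper's: both compute $T'(Q^\ast)=1-Q^\ast F'(Q^\ast)$ via $F(Q^\ast)=0$, invoke the bound $F'(Q)<\calR_0 e^{-1}/Q$ from Lemma~\ref{L:dF} to get $Q^\ast F'(Q^\ast)<\calR_0/e<2$, and combine this with the strict positivity of $Q^\ast F'(Q^\ast)$ for the two-sided inequality. The paper merely phrases the lower-bound step as $-T'(Q^\ast)\le \calR_0 e^{-1}-1<1$, which is your inequality rearranged.
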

\begin{proof}
Let $\calR_0<2e$. Using $F(Q^\ast)=0$ and Lemma~\ref{L:dF} we get
\begin{equation*}
	-T'(Q^\ast) = F(Q^\ast)+Q^\ast\cdot F'(Q^\ast)-1 
			\le Q^\ast\cdot \frac{\calR_0\, e^{-1}}{Q^\ast} - 1 
			= \calR_0\, e^{-1} -1 < 1.
\end{equation*}
\end{proof}

Summarizing these findings, we arrive at the following result.

\begin{theorem} The fixed point iteration~\eqref{E:defT} has the following convergence properties:
\begin{enumerate}
\item For $0<\calR_0<1$, the trivial fixed point $Q=0$ is the only fixed point in $[0,1]$ and the fixed point iteration is locally convergent to $Q=0$, i.e.~$T\lb Q^{(j)}\rb \to 0$.
\item For $1<\calR_0<2e$, a non-trivial fixed point $Q^\ast\in (0,1)$ exists and the fixed point iteration is locally convergent to $Q^\ast$. The trivial fixed point $Q=0$ is locally repelling in this case.
\end{enumerate}
\end{theorem}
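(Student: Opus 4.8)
The plan is to reduce the convergence claim to the standard local-stability criterion for fixed points of a scalar map, namely that a fixed point $Q_0$ is locally attracting when $\lvert T'(Q_0)\rvert < 1$ and locally repelling when $\lvert T'(Q_0)\rvert > 1$. All the pieces needed for this are already assembled in Lemmata~\ref{L:dT} and~\ref{L:dTstrict}, so the proof is mostly a matter of invoking them in the correct case split and reading off the derivative values. First I would treat the two regimes for $\calR_0$ separately, since the existence theorem tells us that the fixed-point structure itself changes at $\calR_0 = 1$.

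For the subcritical case $0 < \calR_0 < 1$, I would argue that by the existence theorem the only fixed point in $[0,1]$ is the trivial one $Q = 0$, and that Lemma~\ref{L:dT}(1) gives $T'(0) = \calR_0 < 1$. Since $T$ is differentiable with $\lvert T'(0)\rvert = \calR_0 < 1$, the standard contraction-near-a-fixed-point argument (e.g.\ the mean value theorem on a small interval $[0,\delta]$ where $\lvert T'\rvert$ stays below $1$) yields local convergence to $Q = 0$. I would note that $T'(0) = \calR_0 > 0$, so iterates approach from one side monotonically and no oscillation complicates the picture.

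For the supercritical case $1 < \calR_0 < 2e$, I would first use the existence theorem to assert the unique non-trivial fixed point $Q^\ast \in (0,1)$. Local repulsion of $Q = 0$ follows immediately from $T'(0) = \calR_0 > 1$. Local attraction of $Q^\ast$ is precisely the content of Lemma~\ref{L:dTstrict}, which gives $-1 < T'(Q^\ast) < 1$ under the restriction $\calR_0 < 2e$; combined with differentiability of $T$, the same mean value theorem argument delivers local convergence to $Q^\ast$. The one-sided bound $T'(Q^\ast) < 1$ already came from Lemma~\ref{L:dT}(2), and the lower bound $T'(Q^\ast) > -1$ is exactly where the hypothesis $\calR_0 < 2e$ is consumed.

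The main obstacle—really the only substantive point—is the upper threshold $\calR_0 < 2e$: this is the condition that prevents $T'(Q^\ast)$ from dropping to $-1$ or below, where a period-doubling bifurcation would destroy local convergence. I would be careful to make explicit that the theorem claims only \emph{local} convergence, so I do not need to control the global basin of attraction; it suffices to produce a neighborhood of each relevant fixed point on which $\lvert T'\rvert < 1$, which the continuity of $T'$ together with the strict inequalities at the fixed points guarantees. I would also flag that for $\calR_0 \ge 2e$ the present estimates no longer preclude $\lvert T'(Q^\ast)\rvert \ge 1$, so the stated range is the natural limit of this elementary approach rather than an artefact of the proof.
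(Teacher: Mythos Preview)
Your proposal is correct and follows essentially the same route as the paper: both arguments reduce to the scalar criterion $\lvert T'(\bar Q)\rvert<1$ for local attraction (respectively $>1$ for repulsion), invoke Lemma~\ref{L:dT}(1) to get $T'(0)=\calR_0$ for the trivial fixed point, and invoke Lemma~\ref{L:dTstrict} for the bound $-1<T'(Q^\ast)<1$ when $1<\calR_0<2e$. The paper phrases the contraction step as a Taylor linearization of $\Delta^{(j)}=T(\bar Q+\Delta^{(j-1)})-T(\bar Q)$ rather than your mean value theorem formulation, but this is only a cosmetic difference.
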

\begin{proof}
Let $\bar{Q}$ denote any of the fixed points and define 
\begin{align*}
	\Delta^{(j)} &:= Q^{(j)} - \bar{Q} 
					= T\lb \bar{Q}+\Delta^{(j-1)}\rb - T(\bar{Q})
\intertext{A linearization yields}
	\Delta^{(j)} &= \Delta^{(j-1)}\cdot T'(\bar{Q}) + \calO\left((\Delta^{(j-1)})^2\right).
\end{align*}
For $0<\calR_0<1$ we only acquire the fixed point $\bar{Q}=0$ and $T'(0)=\calR_0<1$; hence it is locally attractive. If $\calR_0>1$, $\bar{Q}=0$ is still a fixed point, but it turns to be locally repelling due to $T'(0)=\calR_0>1$. For $\calR_0>1$, the second fixed point $\bar{Q}=Q^\ast$ appears and for $\calR_0<2e$ we have
 $-1<T'(Q^\ast)<1$, i.e.~local attractivity. 
 \end{proof}

In the case of $\calR_0\ge 2e$ we fail to show the local convergence of the fixed point iteration. However, as seen later, the practical relevant cases solely cover the basic reproductive number that is far below this technical bound of $2e$. In Figure~\ref{F:ContR} we show a contour plot for $\calR_0$ depending on the rate $\gamma$ and the reference value for maximal transmission $x_p$. The shape parameters $\theta_0$, $\theta_1$ and $\alpha$ of the transmission function~\eqref{E:thetaGauss} agree with the optimized values given in Table~\ref{T:ParamVals}.

\begin{figure}[htb]
\centerline{\includegraphics[width=0.67\textwidth]{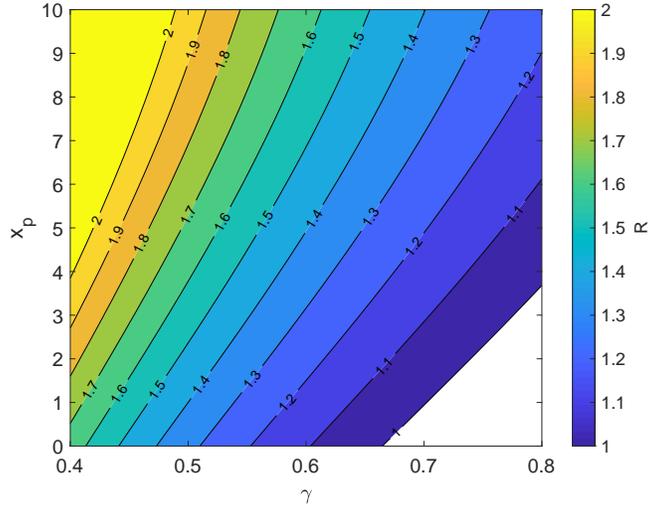}}
\caption{\label{F:ContR} Contour plot for $\calR_0$ depending on $\gamma$ and $x_p$. For $\calR_0>1$, the fixed point problem~\eqref{E:equiSI-PDE} admits a non--trivial solution.}
\end{figure}

\subsection{Approximate analytical solution}

Here, we consider an approximate solution for a simplified version of the fixed point problem~\eqref{E:equiSI-PDE}. We assume a transmission rate $\theta(x)\equiv \theta_0$ independent of the age $x$ and a linear behaviour of the population $p(x)=N_0\lb 1-x/A\rb$. In this case, $N=\int_0^{A} p(x)\, dx = \frac{N_0 A}{2}$ and the basic reproductive number $\calR_0$ equals
\begin{equation}
\label{E:R0:simple}
	\calR_0 = \frac{\theta_0 N_0}{N} \int_0^A e^{-\gamma x} \lb 1-\frac{x}{A}\rb \int_0^x e^{\gamma s}\, ds \, dx = \frac{k}{\delta^3} \lsb (\delta-1)^2+1-2e^{-\delta}\rsb =: k\cdot f(\delta) 
\end{equation}
for $k=\theta_0 A$ and $\delta=A\gamma$ and $f(\delta):= 2 \lb 1-\delta+\frac{\delta^2}{2}-e^{-\delta}\rb / \delta^3$.

Introducing 
\begin{align*}
	\alpha &:= \frac{\theta_0}{N} \int_0^{A} p(x)\, w(x)\, dx 
		= \frac{2\theta_0}{A N_0} \int_0^{A} p(x)\, w(x)\, dx,
\intertext{which is actually a constant, the ODE~\eqref{E:ODE:w} simplifies to}
	w' &= -(\gamma+\alpha) w + \alpha, \quad w(0)=0
\intertext{with the solution}
	w(x) &= \frac{\alpha}{\gamma+\alpha} \lb 1-e^{-(\gamma+\alpha)x}\rb.
\end{align*}
Hence, we are able to compute $\alpha$ and obtain
\begin{align*}
	\alpha &= \frac{2 \theta_0}{A} \int_0^{A}
	 \lb 1-\frac{x}{A}\rb \cdot \frac{\alpha}{\gamma+\alpha}
	  \cdot \lb 1-e^{-(\gamma+\alpha)x} \rb \; dx \\
	 &= \alpha \frac{\theta_0}{A^2 (\gamma+\alpha)^3}
		\lsb \lb A(\gamma+\alpha)-1\rb^2+1 -2e^{-A(\gamma+\alpha)}\rsb.
\end{align*}
Obviously $\alpha=0$, i.e.~$w(x)=0$ for all $x$, is a solution corresponding to the trivial disease free equilibrium. Other biologically reasonable solutions require $w>0$ and hence $\alpha>0$. To determine these solutions, we introduce $z:=A(\gamma+\alpha)$ and arrive at the equation
\begin{equation}
\label{E:zfull}
	z^3 = k\lsb (z-1)^2+1 -2e^{-z}\rsb.
\end{equation}
Here, biologically reasonable solutions with $\alpha>0$  require that $z>A \gamma=k \cdot \gamma/\theta_0$.

Typically, $0.1\lesssim \theta_0 \lesssim 1$ and  $A=100$, hence $k\gg 1$. To seek the roots of~\eqref{E:zfull} depending on $k$, we use the scaling $z=k^b y$. Balancing the individual terms in~\eqref{E:zfull} yields reasonable results only for $b=1$, i.e.~$z=ky$. Introducing $\eps:=1/k \ll 1$ and canceling the exponentially small term $2\eps^3 e^{-y/\eps}$, we arrive at the cubic equation
\begin{equation}
\label{E:cubicy}
	y^3	= y^2 -2\eps y + 2\eps^2. 
\end{equation}
Using a regular asymptotic expansion $y \simeq y_0 + \eps y_1 + \eps^2y_2$, we obtain
\begin{subequations}
\begin{alignat}{5}
\calO(\eps^0): &\qquad& y_0^3 &= y_0^2 &\quad\leadsto\quad 
		& y_0=0 \text{ or } y_0=1.
\intertext{Here, $y_0=1$ is the only interesting solution not leading to the disease-free equilibrium. We thus proceed further with this solution. For the next orders we obtain}
\calO(\eps^1): && 3 y_0^2 y_1 &= 2y_0y_1 -2y_0 
	&\quad\leadsto\quad& y_1 = -2, \\
\calO(\eps^2): && 3 y_0^2 y_2 + 3 y_0 y_1^2 &= y_1^2-2y_0y_2-2y_1+2
	&\quad\leadsto\quad& y_2=-2.
\end{alignat}
\end{subequations}
Therefore, we obtain the expansion
\begin{equation}
	\label{E:asyz}
	z \simeq k -2 -\frac{2}{k} + \calO(1/k^2) 
		\simeq k -2 + \calO(1/k).
\end{equation}
In Figure~\ref{F:Sol_asy_z}, we show a graphical comparison of the solution for the full equation~\eqref{E:zfull}, the cubic equation~\eqref{E:cubicy} (written for $z$ instead of $y$), and the asymptotic expansion~\eqref{E:asyz}.

\begin{figure}[htb]
\centerline{\includegraphics[width=0.67\textwidth]{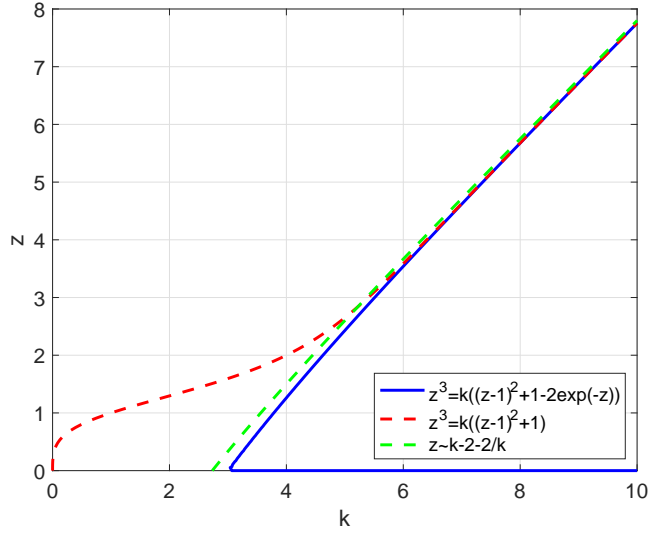}}
\caption{\label{F:Sol_asy_z} Solutions $z$ vs.~$k$ of the full equation~\eqref{E:zfull} (solid blue), the cubic equation~\eqref{E:cubicy} written for $z$ instead of $y$ (dashed red) and the asymptotic expansion~\eqref{E:asyz} (dashed green). For $k\gg 1$, the asymptotic expansion $z\simeq k-2-2/k$ provides a good approximation of the non-trivial solution of the transcendental equation $z^3 = k\lsb (z-1)^2+1 -2e^{-z}\rsb$.}
\end{figure}

Combining the expansion~\eqref{E:asyz} with the requirement $z>k\cdot \gamma/\theta_0$, we observe that for $\gamma>\theta_0$, there cannot be any biologically relevant solution of the fixed point equation besides the trivial disease-free equilibrium. This matches with an analysis based on the basic reproductive ratio, see~\eqref{E:R0:simple}. For $\delta=A\gamma>A\theta_0=k$, we get that $\calR_0 = k\cdot f(\delta) < k \cdot f(k) < 1$ and hence only the trivial disease-free equilibrium exists.

Using the two-term expansion $z\simeq k-2$ we are able to determine
\begin{align}
	\alpha & \simeq \theta_0-\gamma -\frac{2}{A}  
\intertext{With this approximate value of $\alpha$, we get}
	w(x) &= \frac{\alpha}{\gamma+\alpha}\lb 1-e^{-(\gamma+\alpha)x} \rb  = \lsb 1- \frac{A \gamma}{2-A\theta_0} \rsb \lb 1-e^{-\theta_0 x} e^{2x/A}\rb
\intertext{and}
	Q &= \frac{\alpha}{\theta_0} 
	= 1-\frac{\gamma}{\theta_0} -\frac{2}{A \theta_0}.
\end{align}
Note, that the endemic equilibrium $1-\frac{\gamma}{\theta_0}$ of a standard SI model without age structure matches with the expansion of $Q$ for $k=A\theta_0\gg 1$.

\section{Simulations}

\subsection{Data of hospitalized cases}
The data accompanying subsequent simulations and parameter estimation consists of the hospitalized dengue cases from 2009 to 2014 from the city of Semarang, Indonesia. Figure~\ref{F:Data} provides an overview of the relative number of registered cases for 14 age classes. During the observation period, the annual total number of dengue cases varied significantly. The peak year 2010 recorded a total of more than 5.500 cases compared to the minimum of just 1.250 cases in 2012. However, the data clearly show, that the \emph{relative} occurrence of dengue cases within each age class remained rather similar for all six years despite the large variations in the total number of cases within each year. Therefore, we henceforth neglect the variations from one year to another and instead consider the average over the given six year period in final comparison of SI and SIR model. 

\begin{figure}[htb]
\centerline{\includegraphics[width=.67\textwidth]{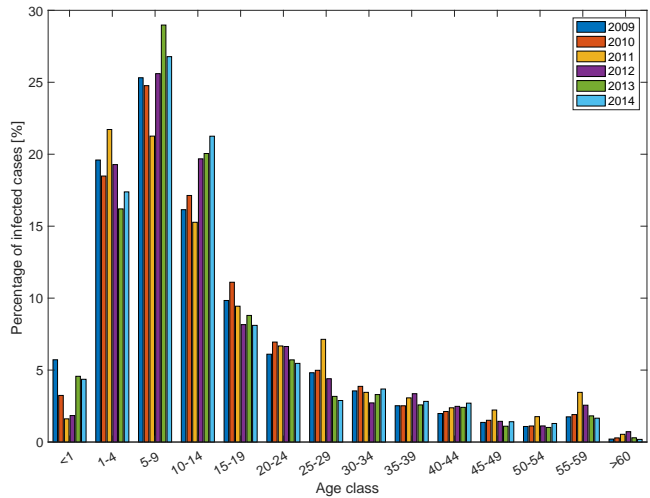}}
\caption{\label{F:Data} Hospitalized dengue cases in Semarang from 2009 to 2014. Shown is the percentage of cases (relative to the total number of infections) within each of the $14$ age classes.}
\end{figure}

\subsection{Parameter estimation}

In a consecutive step, we will use the age-structured SI model~\eqref{E:equiSI-PDE} to predict the age distribution of dengue cases. For that purpose we need to identify the parameters $c:=(\theta_0, \theta_1, x_p, \sigma)$ shaping the transmission rate $\theta$, as well as the rate of loss of immunity $\gamma>0$. We estimate these model parameters, such that the model solution $w$ and the observed data agree in the least-squared sense:
\begin{align*}
	& \min_{(c,\gamma)\in \R^4\times \R_+} \norm{w-data}^2 \\
	\text{s.t.} \quad
	w' & = -\gamma w + \theta(x;c)\cdot (1-w)\cdot Q/N,\ w(0)=0.
\end{align*}
Under certain discretization strategies for the objective function and constraint, this infinite-dimensional transforms into a finite-dimensional minimization problem that can be solved by any standard method. The numerical results of the minimization, using either the average of dengue cases in the six years 2009--2014 or just one single year are given in Table~\ref{T:ParamVals}.

\begin{table}[htb]
\begin{tabular}{r|llllll}
Parameter 	& $\theta_0$	& $\theta_1$	& $x_p$		& $\alpha$ 	& $\gamma$	& $\norm{u-data}$ \\
\hline
Opt.~2009--2014	& $0.1821$	& $2.1135$	& $-0.5209$	& $10.3164$	& $0.6412$	& $135.55$ \\ \hline
Optimized 2009 & $0.2226$	& $1.2351$	& $2.6900$	& $6.2967$	& $0.4422$	& $369.7005$ \\
Optimized 2010	& $0.1883$	& $2.6950$	& $-4.7281$	& $13.7437$	& $0.7810$	& $235.5216$ \\
Optimized 2011	& $0.1777$	& $2.1033$	& $-11.4227$	& $19.5034$	& $0.6753$	& $119.1197$ \\
Optimized 2012	& $0.2083$	& $1.3271$	& $1.7922$	& $6.7265$	& $0.4395$	& $81.4142$ \\
Optimized 2013	& $0.0919$	& $1.5583$	& $4.0973$	& $6.8862$	& $0.4325$	& $155.7308$ \\
Optimized 2014	& $0.1112$	& $1.4544$	& $5.0660$	& $7.9310$	& $0.4849$	& $145.0712$ \\
\end{tabular}
\medskip
\caption{\label{T:ParamVals} Optimized parameter values for the transmission rate $\theta$ and rate of loss of immunity $\gamma$. As reference data we used either the average over all six years or just the data from each single year.}
\end{table}

The optimized parameter given in Table~\ref{T:ParamVals} show that the peak in the transmission rate actually occurs at young ages. The parameter $x_p$ being slightly negative implies that the maximum of $\theta$ occurs for newborns and with increasing age, the risk of getting infected is decreasing. Based on this model for the transmission rate, we are able to reproduce the observed data quite well. Using the annual averaged data, the maximal absolute difference between the simulation and the data equals to $33$ cases and occurs for the age cohort between $15$ and $20$ years.  

In Figure~\ref{F:infect_sim}, we depict the simulation results for the different age cohorts. The given data include infected cases of age less than $1$ year. This information is excluded in our model since we have assumed no vertical transmission. In the senior age classes, the given data only reports for age older than 60 years with no further subdivision.

\begin{figure}[htb]
\centerline{\includegraphics[width=0.67\textwidth]{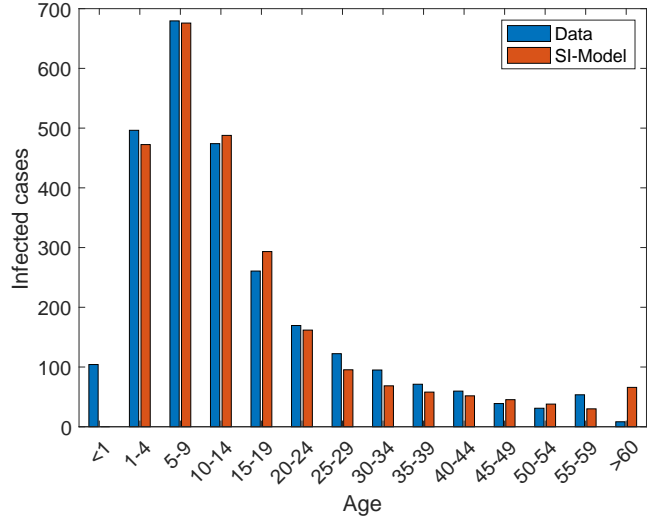}}
\caption{\label{F:infect_sim} Comparison of data (average of observed cases in the years $2009$--$2014$) and simulation for stationary state of infected class.}
\end{figure}

\subsection{SIR model}

In a last step, we extend the previous SI model~\eqref{E:equiSI-PDE} to an SIR model
\begin{subequations}
\label{E:equiSIR-PDE}
\begin{alignat*}{2}
	w' &= - \gamma w + \theta(x;c)\cdot (1-w-z)\cdot \frac{Q}{N}, \qquad & w(0) &=0, \\
	z' &= \gamma w\;, & z(0)&=0, \\
	Q &= \int_0^{A} p(y) w(y)\, dy.
\end{alignat*}
\end{subequations}
This fixed point problem is again solved by an iterative method analogue to the SI model and again the model parameters $(c,\gamma)$ are identified based on the given data averaged over the six years period 2009--2014. 

In Figure~\ref{F:SI_SIR} we show a comparison of averaged field data with the simulation results for the optimized SI and SIR model. Table~\ref{T:parSI_SIR} provides the according optimal parameter values for the SI and SIR model. Note, that the SIR model produces a slightly better agreement with the averaged data reducing the $L_2$--error by around $5\%$.

\begin{figure}[htb]
\centerline{\includegraphics[width=0.67\textwidth]{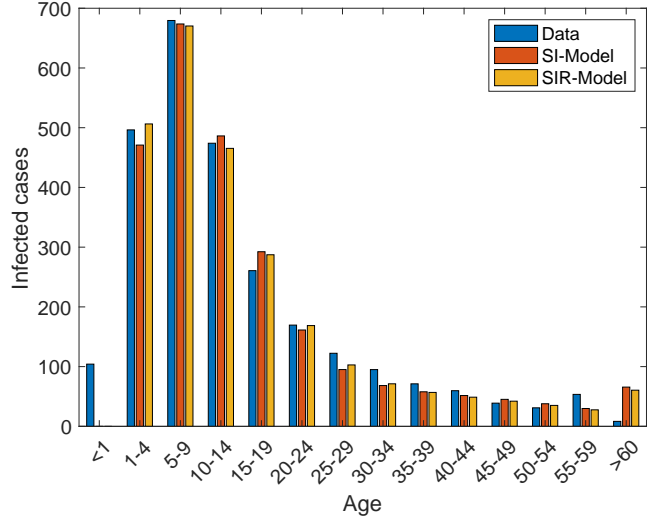}}
\caption{\label{F:SI_SIR} Comparison average data and simulations using the SI and SIR model.}
\end{figure}

\begin{table}[htb]
\begin{tabular}{r|llllll}
Parameter 	& $\theta_0$	& $\theta_1$	& $x_p$		& $\alpha$ 	& $\gamma$	& $\norm{u-data}$ \\
\hline
Optimal SI	& $0.1821$	& $2.1135$	& $-0.5209$	& $10.3164$	& $0.6412$	& $135.55$ \\
Optimal SIR	& $0.1807$	& $2.6877$	& $-5.8795$	& $12.7946$	& $0.6511$	& $128.4848$
\end{tabular}
\medskip
\caption{\label{T:parSI_SIR} Optimized parameter values for $\theta$ and $\gamma$ in the SI and SIR model.}
\end{table}

For the sake of comparison, we consider a standard \emph{age-independent} SIR model
\begin{align*}
	I' &= \frac{\hat{\theta}}{N}(N-I-R)I- (\hat{\gamma}+\mu) I \\
	R' &= \hat{\gamma} I - \mu R
\intertext{with equilibrium $(I^\ast, R^\ast)$ where the infective state is given by}
	I^\ast &= \hat{\mu} N \lb \frac{1}{\hat{\mu}+\gamma} - \frac{1}{\hat{\theta}}\rb\;.
\end{align*}
The mortality rate and transmission rate in the age-independent model are called $\hat{\mu}$ and $\hat{\theta}$. The rate of loss of immunity $\gamma$ is assumed to be equal in both the age-structured and age-independent model. To compare the results given in Table~\eqref{T:parSI_SIR} for the equilibrium distribution of the age--structured SIR--Model~\eqref{E:equiSIR-PDE} with an \emph{age--independent} model, we balance the total number of newborns $p(0)$ in the age-structured model and $\hat{\mu} N$ in the age-independent model to obtain $\hat{\mu}=\frac{p(0)}{N}\simeq 0.018$.

Using the parameters for the age distribution~\eqref{E:pinfty} and the simulation result $Q=2537$, we set $I^\ast=Q$ and get
\begin{equation}
	\hat{\theta} = \lb \frac{1}{\hat{\mu}+\gamma} - \frac{Q}{\hat{\mu} N}\rb^{-1} 
		=  \lb \frac{N}{p(0)+\gamma N} - \frac{Q}{p(0)}\rb^{-1} \simeq 0.711\;.
\end{equation}
Whether this age-independent transmission rate $\hat{\theta}$ can be directly obtained by analytic computations just using the PDE model is left for future research.

\section{Conclusion}

Medical statistics have shown a significant age-dependence in dengue infections. Extending the standard ODE-governed SI(R) models to PDE-governed models incorporates the age structure into the mathematical equations. Equilibrium distributions are then governed by the fixed points of resulting integro-differential equations. An extension of the concept of the basic reproductive number allows to characterize the parameter regimes, where just disease-free or also endemic equilibria exist. Furthermore, the convergence of an iteration scheme to compute the endemic equilibrium has been shown. Existing data from the city of Semarang, Indonesia have been used to validate the steady-state distribution and identify unobservable model parameters. With respect to the $L_2$--error, the output of the age-structured SI model and given data show a high level of agreement. Extending the SI model to a more realistic SIR model yields a further reduction of the $L_2$--error. Comparing the age-structured with the age-independent model allows to determine equivalent effective age-independent transmission rates. Whether and how those rates can be directly obtained from analytic computations is left for future research.

\section*{Acknowledgment}
The authors would like to thank their colleague, Sutimin, from Diponegoro University, Semarang from providing the field data. The first author would like to thank National Science Foundation -- Sri Lanka for funding his research visit to University of Koblenz-Landau, Germany in 2019.

\bibliographystyle{elsarticle-num}
\bibliography{agemodel-GGW}

\end{document}